\documentclass[11pt,letterpaper]{article}
\usepackage[margin=1in]{geometry}

\usepackage[T1]{fontenc}
\usepackage[utf8]{inputenc}
\usepackage{amsmath,amsthm,amssymb,amsfonts}
\usepackage{mathtools}
\usepackage{microtype}
\usepackage{url}
\usepackage[dvipsnames]{xcolor}
\usepackage{hyperref}

\hypersetup{colorlinks=true,allcolors=blue}
\allowdisplaybreaks[4]

\theoremstyle{plain}
\newtheorem{theorem}{Theorem}[section]
\newtheorem{lemma}[theorem]{Lemma}

\newtheorem{corollary}[theorem]{Corollary}

\theoremstyle{definition}
\newtheorem{definition}[theorem]{Definition}

\theoremstyle{remark}

\newtheorem*{remark*}{Remark}

\newcommand{\N}{N}
\newcommand{\C}{C}

\newcommand{\A}{\mathcal{A}}
\newcommand{\J}{\mathcal{J}}
\newcommand{\W}{\mathcal{W}}

\newcommand{\rd}{\mathrm{rd}}

\newcommand{\PAV}{\mathrm{PAV}}
\newcommand{\MES}{\mathrm{MES}}
\newcommand{\util}{u}

\begin{document}

\title{Multi-Winner Elections: Justified Representation, Strategyproofness, and Risk-Avoiding Truthfulness}

\author{%
  Yizhou Ai \\
  University of Toronto \\
  \texttt{yizhou.ai@mail.utoronto.ca} \\
  \and
  Biaoshuai Tao \\
  Shanghai Jiao Tong University \\
  \texttt{bstao@sjtu.edu.cn} \\
}
\date{}

\maketitle

\begin{abstract}
We study approval-based multi-winner elections with justified representation (JR) when voters strategically report their ballots. We prove that there does not exist a strategy-proof mechanism that outputs JR committees, even when the mechanism can be randomized and only ex-ante strategy-proofness is required. The impossibility already holds with three voters, four candidates, and target committee size three, even when the mechanism may select fewer than three candidates. It applies to every common strictly increasing cardinality-based utility function and scales to every target size $k\geq3$, every number of voters divisible by $k$, and every candidate-set size at least $k+1$.

Motivated by our negative results, we then ask for weaker strategy-proof guarantees under voters' partial information. We use the notion of RAT-degree proposed by Hartman, Segal-Halevi, and Tao (EC'25), the number of other voters whose information a manipulator must know before a safe and profitable deviation is possible.  Standard proportional rules, such as greedy approval voting, proportional approval voting (PAV), and the method of equal shares (MES), have poor performances under the RAT-degree metric (with low RAT-degrees).
\end{abstract}

\section{Introduction}

Approval-based multiwinner elections study the problem of selecting a committee
of several candidates from a larger candidate set, where each voter reports the
subset of candidates that she finds acceptable.  This model has become one of
the central settings in computational social choice; see the book of Lackner
and Skowron~\cite{lackner2023multi}.  Approval ballots are simple enough to be
used in practice, while still allowing voters to express support for multiple
alternatives.  Therefore, multiwinner approval rules are relevant in a wide range
of applications, including parliamentary and committee elections, participatory
decision making, shortlisting, group recommendation, and the selection of
representative panels.  Unlike in single-winner elections, the purpose is not
merely to identify the most popular alternative.  A good committee should also
reflect the diversity of the electorate, so that large groups of voters with
coherent interests are not completely ignored.

A fundamental way to formalize this idea is \emph{justified representation}
(JR), introduced by Aziz et al.~\cite{aziz2017justified}.  JR captures a minimal
proportionality requirement.  If a group of at least $n/k$ voters all approve
some common candidate, then this group is large enough to deserve one seat in a
committee of size $k$; hence, the selected committee should contain at least one
candidate approved by at least one voter in the group.  Equivalently, no
quota-sized cohesive group should be left entirely unrepresented.  The quota
intuition behind this requirement is closely related to classical ideas in
apportionment~\cite{pukelsheim2017quota}.  Although JR is deliberately weak, it
serves as an important baseline: any rule that fails JR can be viewed as
violating a very basic form of group representation.

The literature has developed many stronger and related representation axioms.
Proportional justified representation (PJR) strengthens JR by requiring larger
cohesive groups to receive multiple representatives~\cite{fernandez2017pjr},
while extended justified representation (EJR) asks that some voter in such a
group personally obtain the appropriate level of representation; the complexity
of checking and computing such guarantees has also been studied
extensively~\cite{aziz2018ejr}.  More recent work has proposed and analyzed
further refinements, including group-justifying axioms~\cite{elkind2023justifying},
robust and verifiable proportionality axioms~\cite{brill2023ejr+}, individual
representation guarantees~\cite{brill2022individual,brill2025individual}, and
fully proportional justified representation~\cite{KalayciLK2025}.  Average
justified representation has also been considered, including recent work on the
likelihood of its existence~\cite{han2026likelihood}.  Another line of work
measures the extent to which rules satisfy proportionality through quantitative
notions such as the proportionality degree and the degree of justified
representation~\cite{skowron2021proportionality,janeczko2022proportionality}.
Several well-studied voting rules, such as Phragmén's methods, were specifically
analyzed from the perspective of justified representation
\cite{brill2017phragmen,brill2024phragmen}.  Related proportionality ideas also
arise in market-based and core-style approaches to committee selection
\cite{PetersP0021,JiangMW20,MunagalaSWW22}, as well as in extensions with
allocation constraints~\cite{MavrovMS23}.  Together, this body of work shows
that justified representation and its variants have become a central language
for reasoning about fairness in approval-based committee voting.

The preceding discussion treats voters' approval ballots as sincere reports of
their preferences.  In strategic environments, however, voters may have
incentives to misrepresent their ballots.  For example, a voter may withhold
approval from a popular candidate who is likely to be elected anyway, hoping
that the rule will instead use another seat on a less popular candidate that she
also approves.  Such free-riding behavior is a well-known obstacle in
proportional approval voting.  Strategyproofness rules out this type of
behavior by requiring truthful reporting to be a dominant strategy: no voter
should be able to obtain a better outcome by submitting an insincere ballot,
regardless of the other voters' reports.  This requirement is important for
both normative and practical reasons.  Normatively, a representation guarantee
is less compelling if it relies on the assumption that all voters report
truthfully despite strategic incentives.  Practically, strategyproof mechanisms
reduce the cognitive burden on voters, protect less sophisticated voters, and
make the outcome less sensitive to strategic coordination.

Unfortunately, proportionality and strategyproofness are difficult to reconcile
in approval-based committee elections.  Peters~\cite{peters2018proportionality}
proved a strong incompatibility theorem showing that even weak forms of
proportionality and strategyproofness cannot be satisfied simultaneously, under
an additional weak efficiency requirement.  Since JR implies the weak
proportionality condition used in that result, Peters' theorem gives a powerful
explanation for why standard proportional approval rules are manipulable.
Related impossibility and structural phenomena have also been studied in the
broader literature on proportionality and welfarism~\cite{peters2020proportionality}.
At the same time, Peters' result leaves two limitations that are important for
the present paper.  First, it is formulated for deterministic, resolute
committee rules and, therefore, does not rule out randomized mechanisms that are
strategyproof in expectation.  Second, Peters' theorem assumes
weak efficiency, which rules out electing candidates approved by no voter when
enough approved candidates are available.  Although weak efficiency is natural,
it is logically separate from both JR and strategyproofness, and so it is
meaningful to ask whether the incompatibility persists without imposing it.

\subsection{Our Results}

Our first contribution is to close these gaps.  We show that randomization does
not restore compatibility between JR and strategyproofness.  More precisely, we
prove that there is no randomized approval-based committee mechanism that
satisfies ex-post JR and ex-ante strategyproofness (Theorem~\ref{thm:main-impossibility}).  Ex-post JR requires every
committee in the support of the lottery to satisfy JR, while ex-ante
strategyproofness requires truthful reporting to maximize each voter's \emph{expected
utility}.  The impossibility already appears in a very small instance with three
voters, four candidates, and target committee size three.  It holds even when
the rule may leave seats unfilled, and it extends beyond approval-count utility
to every common strictly increasing cardinality-based utility function.
By adapting Peters' parameter reductions~\cite{peters2018proportionality}, we
extend the base obstruction to every target size $k\geq3$, every electorate
size divisible by $k$, and every candidate-set size at least $k+1$
(Corollary~\ref{cor:scaled-impossibility}).
Our result strengthens Peters' theorem in two aspects: firstly, it applies to randomized mechanisms; secondly, it does not require the mechanism to satisfy weak efficiency.

Our result is in sharp contrast to the mechanism design problem in other social choice areas such as \emph{fair division}.
In many settings studied in the fair division problem, introducing randomness in the mechanism can achieve both ex-ante strategyproofness and fairness~\cite{mossel2010truthful,dobzinski2013power,babaioff2022best,sun2025randomized,babaioff2026truthful,bu2025truthful}.

Our theorem allows the mechanism to select fewer than $k$ candidates, showing
that the obstruction is not merely an artifact of forcing the rule to fill
every seat.
In fair division literature, allowing ``resource burning'' (i.e., with a subset of resources unallocated) can be helpful to achieve strategyproofness~\cite{cole2013mechanism}.
However, our result shows that similar ideas in the multiwinner election problem cannot help.

This strong impossibility motivates the second part of the paper.  If exact
strategyproofness is incompatible with JR, then one can ask for weaker
incentive guarantees that better reflect the information available to voters.
We adopt the framework of \emph{risk-avoiding truthfulness} and the associated
\emph{RAT-degree} introduced by Hartman, Segal-Halevi, and Tao~\cite{hartman2025s}.
A manipulation is relevant in this framework only if it is safe under the
manipulator's uncertainty: it must never make the manipulator worse off over all
completions of the unknown voters' information, and it must make her strictly
better off for at least one completion.  The RAT-degree of a rule measures how
many other voters' reports a manipulator must know before such a safe and
  profitable deviation becomes possible.  Thus, while classical strategyproofness
asks whether manipulation is possible under full information, RAT-degree
measures the informational robustness of a rule against risk-avoiding
manipulators.

We analyze the RAT-degree of several JR rules.  We show that standard
proportional rules, including greedy approval voting, proportional approval
voting, and the method of equal shares, can have low RAT-degree: in some profiles, a voter can safely and
profitably manipulate after learning only a relatively small number of other
voters' reports.  In this sense, the paper complements the impossibility of full
strategyproofness with a more fine-grained analysis of how common JR rules behave
under partial information.

\subsection{Related Work}

Concurrent with our work, Sinay and Gonen~\cite{sinay2026degree} independently
study PAV under dropping-candidate deviations and the more restrictive superset
comparison of truthfully approved winners.  They prove vulnerability when the
manipulator knows $\lceil n/k\rceil$ other ballots and immunity when at most
$\lfloor n/(k+1)\rfloor-1$ ballots are known.  Their result is stronger and more
complete than our one-sided PAV upper bound within this superset-dropping model:
the vulnerability bound agrees numerically with ours, while their analysis also
provides the complementary immunity guarantee.

\section{Preliminaries}

Let $\N=\{1,\ldots,n\}$ be the voter set, let $\C$ be the candidate set with $|\C|\ge k$, and let $k$ be the target committee size.  A ballot is a subset $A_i\subseteq \C$.  A profile is denoted by
\[
  \A=(A_1,\ldots,A_n).
\]
The set of all size-$k$ committees is
\[
  \W_k=\{W\subseteq \C: |W|=k\}.
\]
We also write
\[
  \W_{\leq k}=\{W\subseteq \C: |W|\leq k\}
\]
for the committees that may leave some of the $k$ target seats unfilled.
Unless stated otherwise, voter $i$ evaluates a committee by approval-count utility,
\[
  \util_i(W;T_i)=|W\cap T_i|,
\]
where $T_i$ is voter $i$'s true approval set.

\begin{definition}[Justified representation]\label{def:JR}
A committee $W\subseteq \C$ satisfies justified representation at profile $\A$
with target size $k$ if every group $X\subseteq \N$ with $|X|\ge n/k$ and
\[
  \bigcap_{i\in X} A_i\neq \emptyset
\]
has at least one represented member:
\[
  W\cap \bigcup_{i\in X} A_i\neq \emptyset.
\]
Let $\J(\A)$ be the set of size-$k$ committees that satisfy JR at $\A$, and let
$\J_{\leq k}(\A)$ be the set of committees in $\W_{\leq k}$ that satisfy JR.
\end{definition}

A randomized committee rule maps each profile $\A$ to a probability distribution
over its feasible committees.  It satisfies ex-post JR if every committee in the
support of the distribution satisfies JR.  It is ex-ante strategyproof if
truthful reporting maximizes each voter's expected true utility against every
fixed profile of other reports.  When committees may have size below $k$, the
quota in Definition~\ref{def:JR} continues to be determined by the target size
$k$, not by the realized committee size.

\section{Randomization does not reconcile JR and strategyproofness}

We now give an analytic impossibility proof.  The result allows the
mechanism to leave seats unfilled and requires only the local utility condition
$f(2)>f(1)$.  In particular, it applies to every common strictly increasing
cardinality-based utility function.

\begin{theorem}[General Impossibility]\label{thm:main-impossibility}
Let $n=k=3$ and let $\C=\{1,2,3,4\}$.  Suppose that all voters share the
same cardinality-based utility function
\[
  \util_i(W;T_i)=f(|W\cap T_i|)
\]
and that $f(2)>f(1)$.  No randomized approval committee rule with range
$\W_{\leq 3}$ satisfies both ex-post JR, evaluated using the target size
$k=3$, and ex-ante strategyproofness.
\end{theorem}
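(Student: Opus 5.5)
The plan is to exploit the fact that with $n=k=3$ the JR quota is $n/k=1$. So at every profile, ex-post JR forces every committee in the support to \emph{hit every nonempty ballot}: $W\cap A_i\neq\emptyset$ for all $i$ with $A_i\neq\emptyset$. The whole argument then becomes a combinatorial game of hitting sets of size at most $3$ inside $\C=\{1,2,3,4\}$. Incentive constraints are used only through comparisons between $f(1)$ and $f(2)$, plus $f(0)\le f(1)$ where needed. The only strictness I intend to use is $f(2)>f(1)$, so the argument applies to every strictly increasing cardinality-based $f$.

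\textbf{Step 1 (forcing lemma).} Suppose voter $i$ has true ballot $\{a,b\}$, and the other two ballots are such that if $i$ reports the singleton $\{b\}$, every committee in $\J_{\leq 3}$ contains both $a$ and $b$. Then $i$ can guarantee utility $f(2)$ by deviating. By ex-ante strategyproofness, truthful reporting must also give expected utility at least $f(2)$. Since utility from $\{a,b\}$ never exceeds $f(2)$, every committee in the support at the truthful profile must contain both $a$ and $b$. A typical instance is $(\{1,2\},\{1\},\{3\})$: voter $1$ reporting $\{2\}$ forces $\{1,2,3\}$, so the truthful outcome must be exactly $\{1,2,3\}$. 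I will also record a weaker \emph{expected} version. If a deviation yields a lottery in which both $a,b$ are elected with probability $p$, then at the truthful profile both must be elected with probability at least $p$. This is what allows information to propagate along chains of profiles once forcing is no longer deterministic.

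\textbf{Step 2 (chain of profiles).} I will look for two voters whose forced pairs cannot coexist in three seats together with the hitting requirement of the remaining ballot. Pairs $\{1,2\}$ and $\{3,4\}$ are the natural target, since together they would need four seats. Direct deterministic forcing of both pairs in one profile is impossible with only three voters. So I will start from a profile such as $(\{1,2\},\{3,4\},\{1,4\})$ and analyse the deviation profiles $(\{2\},\{3,4\},\{1,4\})$ and $(\{1,2\},\{3\},\{1,4\})$. To pin down their lotteries, I will apply Step~1 and its expected version to the other voters at those profiles. For instance, at $(\{2\},\{3,4\},\{1,4\})$, voter~$3$ reporting $\{1\}$ or voter~$2$ reporting $\{3\}$ produces profiles whose JR committees are essentially determined. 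From this I will derive lower bounds on the probability that $\{1,2\}\subseteq W$ and that $\{3,4\}\subseteq W$ at these profiles. These transfer to the original profile, where the two events are disjoint. If the two bounds sum to more than $1$, that is the contradiction.

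\textbf{Main obstacle.} The hard part is Step~2: choosing the chain so that the probability bounds actually sum to more than $1$. Many intermediate profiles have several minimal hitting sets, so the mechanism keeps freedom there. I will organise the search by the symmetry group of $\{1,2,3,4\}$ and the voters, and write each profile's constraints as linear inequalities in the support probabilities. Each inequality should be homogeneous in $f(2)-f(1)$ and, where relevant, $f(1)-f(0)$, so that dividing by $f(2)-f(1)>0$ removes dependence on $f$. If the chain through $\{1,4\}$ leaves slack, I will try third ballots $\{1,3\}$ or the empty ballot instead. Empty ballots are useful because they remove a hitting constraint and let a fourth candidate be dropped, since committees of size below $3$ are allowed.
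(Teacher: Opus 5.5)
Your forcing lemma in Step 1 is fine, but Step 2 is where the proof actually has to happen, and it is only a search plan that is missing the tool it needs. The profiles you describe as ``essentially determined'' are not determined. $(\{2\},\{3,4\},\{1\})$ has two JR committees, $\{1,2,3\}$ and $\{1,2,4\}$, and $(\{2\},\{3\},\{1,4\})$ has two, $\{1,2,3\}$ and $\{2,3,4\}$. The theorem quantifies over all rules, including ones that depend on voter identities and candidate names. Such a rule may put all its weight on either committee, so no positive lower bound propagates back up your chain. What you need is the symmetrization reduction. Because all voters share $f$, conjugating the rule by any voter permutation and candidate permutation preserves ex-post JR and ex-ante strategyproofness. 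The strategyproofness inequalities are linear in the lottery, so the uniform mixture over all conjugates is an anonymous, neutral rule with both properties. Only after this reduction may you conclude that a profile fixed by a candidate transposition splits $1/2$--$1/2$ between the two swapped committees. ``Organising the search by the symmetry group'' does not by itself license this.

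Even after the reduction, your chosen chain does not give the strict inequality you want. With neutrality, the two deviations you name at $(\{2\},\{3,4\},\{1,4\})$ force $\{1,2,4\}$ and $\{2,3,4\}$ each with probability exactly $1/2$. Symmetrically, $(\{1,2\},\{3\},\{1,4\})$ yields $\{1,2,3\}$ and $\{1,3,4\}$ each with probability $1/2$. So at $(\{1,2\},\{3,4\},\{1,4\})$ you get $\Pr[\{1,2\}\subseteq W]\ge 1/2$ and $\Pr[\{3,4\}\subseteq W]\ge 1/2$, which sum to exactly $1$, with no contradiction.

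The paper's contradiction comes from a different configuration. At $(\{2,4\},\{1\},\{2,3\})$, voter $1$ deviates to $\{4\}$, reaching a profile whose two JR committees split $1/2$--$1/2$. Together with the profile's own voter/candidate symmetry, this pins the truthful lottery to $\{1,2,3\}$ and $\{1,2,4\}$ with probability $1/2$ each. At the star $(\{1,4\},\{1,2\},\{1,3\})$, voter $1$ drops the centre and produces an isomorphic copy of that profile in which both committees contain $1$ and $4$. She therefore secures $f(2)$ with certainty, so truth must elect $\{1,4\}$ almost surely. The three-fold symmetry of the star then forces $\{1,2\}$ and $\{1,3\}$ as well, i.e.\ four candidates in at most three seats.

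Finally, you allow yourself $f(0)\le f(1)$, which the theorem does not assume. The paper avoids it by using only deviations to nonempty subsets of the true ballot. JR then keeps the deviator's utility in $\{f(1),f(2)\}$ both before and after the deviation.
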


\begin{proof}
Suppose, toward a contradiction, that there exists a randomized rule
$\mathcal{M}$ satisfying all the stated requirements; $\mathcal{M}$ may depend
on voter identities and candidate names.  We symmetrize $\mathcal{M}$ as
follows.  For a voter permutation $\pi$ and a candidate permutation $\sigma$,
write
\[
  ((\pi,\sigma)\A)_i=\sigma(A_{\pi^{-1}(i)}).
\]
Define the conjugate rule $\mathcal{M}^{\pi,\sigma}$ by letting
$\mathcal{M}^{\pi,\sigma}(\A)$ be the distribution of $\sigma^{-1}(W)$ when
\[
  W\sim\mathcal{M}((\pi,\sigma)\A).
\]
Every conjugate rule has the same range and satisfies ex-post JR whenever
$\mathcal{M}$ does: permutations preserve committee size, and relabeling voters
and candidates maps every JR violation bijectively to a JR violation.

Every conjugate rule is also ex-ante strategyproof.  To see this, fix a voter
$i$, her true approval set $T_i$, the other voters' reports, and an alternative
report $R_i$.  Under the relabeling above, voter $i$'s change from $T_i$ to
$R_i$ becomes voter $\pi(i)$'s change from $\sigma(T_i)$ to
$\sigma(R_i)$.  Moreover, for every committee $W$ selected at the relabeled
profile,
\[
  f\bigl(|\sigma^{-1}(W)\cap T_i|\bigr)
  =f\bigl(|W\cap\sigma(T_i)|\bigr).
\]
Because all voters share the same function $f$, a profitable deviation by
voter $i$ under $\mathcal{M}^{\pi,\sigma}$ would yield a profitable deviation
by voter $\pi(i)$ under $\mathcal{M}$, contradicting the strategyproofness of
$\mathcal{M}$.

Let $\overline{\mathcal{M}}$ be the uniform mixture of
$\mathcal{M}^{\pi,\sigma}$ over all voter and candidate permutations.  Every
committee in the support of every component rule satisfies JR, so the mixture
remains ex-post JR.  The ex-ante strategyproofness inequalities are linear in
the outcome probabilities, so they are also preserved under mixtures.  By
construction, $\overline{\mathcal{M}}$ is anonymous and neutral.  Consequently,
the existence of any rule satisfying the stated requirements---including a
voter-aware or candidate-name-aware rule---would imply the existence of an
anonymous and neutral rule satisfying them.  It is, therefore, enough to derive
a contradiction for $\overline{\mathcal{M}}$.

Consider the profile
\[
  \A=(\{2,4\},\{1\},\{2,3\}).
\]
Since $n/k=1$, ex-post JR requires every voter with a nonempty ballot to be
represented.  The only committees in $\W_{\leq3}$ satisfying this requirement
are
\[
  \{1,2\},\qquad
  \{1,2,3\},\qquad
  \{1,2,4\},\qquad
  \{1,3,4\}.
\]
The profile is unchanged when voters $1$ and $3$ are exchanged and candidates
$3$ and $4$ are exchanged.  Anonymity and neutrality, therefore, imply that
these four committees have probabilities
\[
  u,\qquad v,\qquad v,\qquad w,
\]
respectively, where
\begin{equation}
  u+2v+w=1.
  \label{eq:analytic-normalization}
\end{equation}

Suppose voter $1$, whose true approval set is $\{2,4\}$, reports $\{4\}$
instead.  The resulting profile is
\[
  \A'=(\{4\},\{1\},\{2,3\}).
\]
At this profile, ex-post JR and the upper bound of three on committee size
leave only the two possible outcomes
\[
  \{1,2,4\}
  \quad\text{and}\quad
  \{1,3,4\}.
\]
Exchanging candidates $2$ and $3$ leaves $\A'$ unchanged and exchanges these
two committees.  Neutrality, therefore, assigns probability $1/2$ to each of
them.

At the truthful profile, voter $1$ obtains two approved candidates only from
committee $\{1,2,4\}$.  Her truthful expected utility is, therefore,
\[
  (1-v)f(1)+vf(2).
\]
After the deviation, her expected true utility is
\[
  \frac12 f(1)+\frac12 f(2).
\]
Ex-ante strategyproofness gives
\[
  \left(v-\frac12\right)\bigl(f(2)-f(1)\bigr)\geq0.
\]
Since $f(2)>f(1)$, we have $v\geq1/2$.  Equation
\eqref{eq:analytic-normalization} and nonnegativity of $u,v,w$ give
$v\leq1/2$.  Consequently,
\[
  v=\frac12
  \qquad\text{and}\qquad
  u=w=0.
\]
Thus, at $\A$, the rule selects each of $\{1,2,3\}$ and $\{1,2,4\}$ with
probability $1/2$.

Now consider the star profile
\[
  \mathcal{B}=(\{1,4\},\{1,2\},\{1,3\}).
\]
Suppose voter $1$ reports $\{4\}$ instead of her true approval set
$\{1,4\}$.  The resulting profile is
\[
  \mathcal{B}'=(\{4\},\{1,2\},\{1,3\}).
\]
This profile is isomorphic to $\A$: exchange voters $1$ and $2$ and apply the
candidate relabeling
\[
  1\mapsto4,\qquad
  2\mapsto1,\qquad
  3\mapsto3,\qquad
  4\mapsto2.
\]
Under this relabeling, committees $\{1,2,3\}$ and $\{1,2,4\}$ become
$\{1,3,4\}$ and $\{1,2,4\}$, respectively.  Anonymity and neutrality,
therefore, imply that, at $\mathcal{B}'$, these two committees are selected with
probability $1/2$ each.  Both contain candidates $1$ and $4$.
Consequently, the deviation gives voter $1$ utility $f(2)$ with probability
one.

At the truthful profile $\mathcal{B}$, ex-post JR guarantees that every committee in
the support represents voter $1$.  Her number of approved selected candidates
is, therefore, either one or two, so her utility in every supported outcome is
either $f(1)$ or $f(2)$.  Strategyproofness requires her truthful expected
utility to be at least the deviation utility $f(2)$.  Since $f(2)>f(1)$, this
is possible only if
\[
  \Pr_{\overline{\mathcal{M}}(\mathcal{B})}[\{1,4\}\subseteq W]=1.
\]

The star profile is symmetric under arbitrary simultaneous permutations of
the three voters and their leaf candidates $4,2,3$.  Anonymity and neutrality,
consequently, also give
\[
  \Pr_{\overline{\mathcal{M}}(\mathcal{B})}[\{1,2\}\subseteq W]=1
  \quad\text{and}\quad
  \Pr_{\overline{\mathcal{M}}(\mathcal{B})}[\{1,3\}\subseteq W]=1.
\]
Thus, every committee in the support must contain all four candidates,
contradicting the requirement $|W|\leq3$.
\end{proof}

Theorem~\ref{thm:main-impossibility} is already a variable-size impossibility
result.  Since every fixed-size rule with range $\W_3$ is also a rule with
range $\W_{\leq3}$, the fixed-size impossibility follows immediately.  The
theorem is also stronger than the strictly increasing formulation stated in
the introduction, because its proof uses only $f(2)>f(1)$.

\subsection{Scaling Up the Impossibility}

The base obstruction extends simultaneously to larger electorates, larger
target committee sizes, and larger candidate sets.  The proof adapts the
voter-copying, candidate-restriction, and committee-size reductions of
Peters~\cite{peters2018proportionality}.  Allowing the induced rules to leave
seats unfilled lets us carry out the candidate-restriction step without a weak
efficiency assumption.

\begin{corollary}[Scaled impossibility]\label{cor:scaled-impossibility}
Let $k\geq3$, let $q\in\mathbb{Z}^{+}$, let $n=qk$, and let
$|\C|\geq k+1$.  Suppose that all voters share a cardinality-based utility
function $f$ satisfying $f(2)>f(1)$.  No randomized rule with range
$\W_{\leq k}$ satisfies both ex-post JR, evaluated using target size $k$, and
ex-ante strategyproofness.  The same conclusion holds when the range is
restricted to $\W_k$.
\end{corollary}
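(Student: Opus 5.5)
We argue by contradiction and reduce to Theorem~\ref{thm:main-impossibility}. Assume a randomized rule $\mathcal{M}$ with range $\W_{\leq k}$, $n=qk$ voters and candidate set $\C$ with $|\C|\ge k+1$ is ex-post JR (quota $n/k=q$) and ex-ante strategyproof. We build from it a rule $\mathcal{M}^\ast$ for $n=k=3$ and $\C^\ast=\{1,2,3,4\}$ that has the same two properties and range $\W_{\leq3}$, which the theorem forbids. The fixed-size statement needs no separate argument, since a rule with range $\W_k$ is in particular a rule with range $\W_{\leq k}$. The reduction has three steps, following Peters: voter copying, committee-size reduction, and candidate restriction. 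In each step we check JR and strategyproofness separately.

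\emph{Step 1 (voter copying).} Reduce to $q=1$. Given a profile $(A_1,\ldots,A_k)$ on $k$ voters, let each voter be replicated $q$ times and run $\mathcal{M}$; call the result $\mathcal{M}_1$. If a group of $k$-voter ballots violates JR with quota $1$, then its $q$-fold copy violates JR with quota $q$, so $\mathcal{M}_1$ is ex-post JR. Strategyproofness does not transfer directly, because one deviation in $\mathcal{M}_1$ changes $q$ ballots at once. We handle this by a hybrid argument. Change the $q$ copies from $T_i$ to $R_i$ one at a time. Each intermediate profile has every copy's true set equal to $T_i$. Strategyproofness of $\mathcal{M}$ then says that each single switch does not raise the expected value of $f(|W\cap T_i|)$, and chaining the inequalities gives the claim. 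This works because the utility being compared is the same function of $W$ at every step, namely the common true utility of the copies.

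\emph{Step 2 (committee size).} Reduce from $k$ to $3$ with $n=k$. Add $k-3$ dummy voters. Voter $j$ approves only a private fresh candidate $d_j$. Each dummy forms a singleton group that reaches the quota $1$, so JR forces every $d_j$ into every committee in the support. This leaves at most $3$ seats for the original candidates, and there the JR conditions for the three real voters are exactly those of the $n=k=3$ instance. Strategyproofness for a real voter carries over, because the $d_j$ lie outside her approval set and her utility depends only on $W\cap\C^\ast$. This step needs $|\C|\ge 4+(k-3)=k+1$, which is the hypothesis.

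\emph{Step 3 (candidate restriction).} Handle surplus candidates. Any candidates in $\C$ beyond those used above are approved by nobody. We define $\mathcal{M}^\ast$ as the distribution of $W\cap\C^\ast$, with dummies removed. Candidates approved by nobody never help satisfy JR, so the restricted committee still satisfies JR, and it has size at most $3$. Utilities are unchanged, so strategyproofness is preserved. This is the place where allowing $\W_{\leq k}$ replaces Peters' weak efficiency assumption: the truncated committee may be smaller than $3$, and Theorem~\ref{thm:main-impossibility} allows that.

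I expect the main obstacle to be the bookkeeping in Step 1. There I must confirm that the hybrid chain really controls the expected true utility, including the fact that intermediate profiles are legitimate profiles for $\mathcal{M}$. I must also check that JR at a quota-$q$ profile of copies corresponds exactly to JR at the original quota-$1$ profile, in the direction needed, namely that the copied rule's support is JR for the small profile. Steps 2 and 3 are direct checks.
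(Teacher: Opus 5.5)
Your proposal is correct and is essentially the paper's proof, including the same hybrid argument for voter cloning. The only difference is ordering: the paper restricts to $k+1$ candidates first and then removes one singleton voter at a time from $r=k$ down to $4$, whereas you add all $k-3$ singleton dummies at once and intersect with $\C^\ast$ at the end, which yields exactly the same composed three-voter rule.
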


\begin{proof}
Suppose that a rule $\mathcal{M}$ with the stated properties exists.  We reduce
it to a rule forbidden by Theorem~\ref{thm:main-impossibility}.

\paragraph{Restricting the candidate set.}
Fix a subset $\C_0\subseteq\C$ with $|\C_0|=k+1$.  For a profile over
$\C_0$, regard the same ballots as ballots over $\C$, draw
$W\sim\mathcal{M}(\A)$, and output $W\cap\C_0$.  This defines a rule
$\mathcal{M}^0$ over $\C_0$ whose outcomes have size at most $k$.

The rule $\mathcal{M}^0$ satisfies ex-post JR.  Indeed, all reported approvals
belong to $\C_0$.  For every cohesive group $X$, ex-post JR of $\mathcal{M}$
gives
\[
  W\cap\bigcup_{i\in X}A_i\neq\emptyset.
\]
Since $\bigcup_{i\in X}A_i\subseteq\C_0$, the same intersection remains
nonempty after replacing $W$ by $W\cap\C_0$.
Moreover, for every true approval set $T_i\subseteq\C_0$,
\[
  |(W\cap\C_0)\cap T_i|=|W\cap T_i|.
\]
A profitable deviation under $\mathcal{M}^0$ would, consequently, be the same
profitable deviation under $\mathcal{M}$.  Hence, $\mathcal{M}^0$ is ex-ante
strategyproof.  Notice that this step does not require weak efficiency: any
selected candidate outside $\C_0$ is simply removed, and the variable-size
range keeps the induced rule well defined.

\paragraph{Reducing the electorate.}
From $\mathcal{M}^0$ construct a rule $\mathcal{M}_k$ for $k$ voters.  Given
$\A=(A_1,\ldots,A_k)$, replace voter $i$ by a block of $q$ voters, all
reporting $A_i$, and apply $\mathcal{M}^0$.  In the cloned instance the JR quota
is $(qk)/k=q$.  Whenever $A_i\neq\emptyset$, the corresponding block of $q$
clones is cohesive and meets this quota, so every supported committee
intersects $A_i$.  The induced $k$-voter rule, therefore, satisfies JR, whose
quota is $k/k=1$.

To verify strategyproofness, suppose that voter $i$, with true approval set
$T_i$, gains under $\mathcal{M}_k$ by reporting $R_i$.  Endow every clone in
her block with the same true approval set $T_i$, and change their reports from
$T_i$ to $R_i$ one at a time.  If $U_t$ is their common expected true utility
after the first $t$ changes, then
$U_q>U_0$.  Thus, $U_t>U_{t-1}$ for some $t$.  At that step a single clone
with true set $T_i$ profits by reporting $R_i$, contradicting ex-ante
strategyproofness of $\mathcal{M}^0$.  Hence, $\mathcal{M}_k$ is ex-ante
strategyproof.

\paragraph{Reducing the target size.}
We now give a reduction from parameters $(r,r,r+1)$ to
$(r-1,r-1,r)$ for every $r\geq4$, where the entries denote the number of
voters, the target size, and the number of candidates.  Suppose that
$\mathcal{M}_r$ is a rule for $r$ voters, target size $r$, and a candidate set
$\C_r$ of size $r+1$.  Choose $z\in\C_r$ and put
$\C_{r-1}=\C_r\setminus\{z\}$.  Given a profile
$(A_1,\ldots,A_{r-1})$ over $\C_{r-1}$, append one voter with ballot
$\{z\}$, draw
\[
  W\sim\mathcal{M}_r(A_1,\ldots,A_{r-1},\{z\}),
\]
and output $W\setminus\{z\}$.

The augmented instance has JR quota $r/r=1$, so ex-post JR forces $z\in W$.
Consequently, the reduced committee has size at most $r-1$.  The same
quota-one argument forces $W\cap A_i\neq\emptyset$ for every original voter
with $A_i\neq\emptyset$.  Since $z\notin A_i$, removing $z$ preserves this
representation, which is exactly JR for the reduced instance, whose quota is
$(r-1)/(r-1)=1$.  Finally, the appended ballot is fixed and none of the
original voters approves $z$.  Removing $z$, therefore, does not change any
original voter's utility, so a profitable deviation in the reduced rule would
also be profitable under $\mathcal{M}_r$.

Starting with $\mathcal{M}_k$ and applying this reduction for
$r=k,k-1,\ldots,4$ yields a randomized rule for three voters, target size
three, and four candidates, with range $\W_{\leq3}$, ex-post JR, and ex-ante
strategyproofness.  This contradicts
Theorem~\ref{thm:main-impossibility}.  Since $\W_k\subseteq\W_{\leq k}$, the
fixed-size conclusion follows as well.
\end{proof}

\section{RAT-Degree for Greedy, PAV, and MES Rules}

Motivated by the strong impossibility result in the previous section, we consider JR mechanisms that satisfy weaker strategyproof notions.
We consider the notion of \emph{risk-avoiding truthfulness} and the corresponding measurement \emph{RAT-degree} proposed by Hartman, Segal-Halevi, and Tao~\cite{hartman2025s}.
For conventional strategyproofness that is defined based on dominant strategies, a mechanism fails strategyproofness if, for an agent $i$ and a \emph{particular} combination of the remaining $n-1$ agents' strategies, agent $i$'s best response is not truth-telling.
Intuitively, risk-avoiding truthfulness focuses on manipulations that are attractive to an agent who is unwilling to gamble under uncertainty about the other agents' reports. A deviation is relevant only if it is safe: given the information available to the agent, it never leads to a worse outcome than truthful reporting, and it leads to a strictly better outcome in at least one possible realization. The RAT-degree measures how much information about the other agents is needed before such a safe and profitable deviation can exist. Thus, a mechanism with a low RAT-degree can be safely manipulated even when the agent knows little about the others, while a mechanism with a high RAT-degree requires the agent to learn many other agents' reports before manipulation becomes risk-free. In this sense, the RAT-degree refines the usual binary distinction between truthful and non-truthful mechanisms by placing mechanisms on a spectrum according to their robustness against safe manipulation under partial information.
This ``full-information'' assumption is often too strong in practical scenarios, and the notions risk-avoiding truthfulness and RAT-degree are motivated by this.

In this section, we show that three commonly used voting mechanisms for JR committees, the greedy rule, the PAV rule, and the method of equal shares, have limited performances in terms of the RAT-degree.

In this section, we consider only deterministic voting mechanisms.
We use $\mu$ to denote a deterministic voting rule/mechanism that maps a profile $\A$ to a winning committee (an element in $\W_k$).
 
\begin{definition}[Safe and profitable deviations]
\label{def:safe-profitable}
Fix a deterministic rule $\mu$.  Voter $i$ has true ballot $T_i$ and knows the information of a set $K\subseteq \N\setminus\{i\}$ of other voters.  A completion is an assignment of information to the voters in $\N\setminus(K\cup\{i\})$.

A report $B_i\neq T_i$ is safe at the known information if, for every completion $U$,
\[
  \util_i(\mu(B_i,\A_K,U);T_i)
  \ge
  \util_i(\mu(T_i,\A_K,U);T_i).
\]
It is profitable if the inequality is strict for at least one completion.
\end{definition}

\begin{definition}[RAT-degree]
The RAT-degree $\rd(\mu)$ is the smallest integer $d$ for which there exist a voter $i$, a set $K\subseteq \N\setminus\{i\}$ with $|K|=d$, known information for $K$, and a report $B_i\neq T_i$ that is both safe and profitable.  If no such deviation exists for any $d$, then $\rd(\mu)=n$.
Thus, the degree counts known other voters, not the manipulator.
\end{definition}

This convention is important.  More generally, knowing $n-s-1$ other voters means that exactly $s$ other voters remain unknown.

\begin{lemma}[Information monotonicity]
\label{lem:information-monotonicity}
If a deviation is safe and profitable when voter $i$ knows a set $K$ of other voters, then there is a superset $K'\supseteq K$ for which the same deviation is safe and profitable whenever the profitable completion is consistent with the additional information.
\end{lemma}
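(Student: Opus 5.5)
The plan is to take $K'$ to be any superset of $K$ that additionally fixes the reports of some unknown voters, and to choose those reports so that they agree with a completion witnessing profitability. Fix the data of the hypothesis: voter $i$ with true ballot $T_i$, the known set $K$ with reports $\A_K$, and the deviation $B_i\neq T_i$ that is safe and profitable at $(K,\A_K)$. Let $U^\ast$ be a completion for which the inequality of Definition~\ref{def:safe-profitable} is strict. Take any $K'$ with $K\subseteq K'\subseteq\N\setminus\{i\}$, and let the known information on $K'\setminus K$ be the restriction $U^\ast|_{K'\setminus K}$. This choice is exactly what ``the profitable completion is consistent with the additional information'' means.

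Safety transfers directly. A completion $U'$ for the unknown voters $\N\setminus(K'\cup\{i\})$ at the larger information set, combined with $U^\ast|_{K'\setminus K}$, is a completion $U$ at the original information set $K$. The profile $(B_i,\A_{K'},U')$ therefore coincides with $(B_i,\A_K,U)$, and likewise for $T_i$. The safety inequality for $U$ gives the safety inequality for $U'$, so the completions at $K'$ form a subfamily of those at $K$.

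Profitability transfers as well. The restriction $U^\ast|_{\N\setminus(K'\cup\{i\})}$ is a legitimate completion at $K'$. Combined with the known information, it reproduces exactly the profile on which the strict inequality held. The deviation is then strictly profitable for it, and $B_i\neq T_i$ is unchanged.

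The only delicate point is the bookkeeping of profiles. We need the concatenation $(\A_K, U^\ast|_{K'\setminus K}, U')$ to equal a completion in the original sense, and the rule $\mu$ must be evaluated on identical profiles. Since $\mu$ is deterministic and completions are arbitrary assignments of ballots, this identification is immediate.

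We should also note the degenerate case $K'=\N\setminus\{i\}$. There the unique (empty) completion is the profitable one, and the statement reduces to a full-information profitable deviation.
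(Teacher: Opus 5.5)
Your proposal is correct and follows the same approach as the paper. Revealing the voters in $K'\setminus K$ according to a profitable completion restricts the set of completions, which preserves safety, and the restriction of that same completion to the remaining unknown voters still witnesses profitability.
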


\begin{proof}
Safety is quantified over all completions consistent with the known information.  Revealing additional true information restricts this set of completions.  Hence, a deviation that was safe before remains safe after the restriction.  Choose one completion at which the deviation is profitable, and reveal any additional voters according to that completion.  The same completion still witnesses profitability.
\end{proof}

\subsection{Greedy approval voting}

Greedy approval voting starts with an empty committee and all voters uncovered.  At each step it selects a candidate with maximum approval among uncovered voters, adds the candidate to the committee, and marks all approvers of that candidate as covered.  Ties are resolved by a fixed public order.  If all remaining candidates have zero uncovered support before $k$ candidates have been chosen, the remaining seats are filled by the same tie-breaking order.
The usual uncovered-voter argument shows that this rule satisfies JR: a cohesive group left unrepresented would keep a common candidate with uncovered support at least $n/k$ throughout the greedy phase.

\begin{theorem}[GreedyAV RAT upper bound]
\label{thm:greedy-rat}
Suppose that $k\geq 2$, $|C|\geq k+1$, and $n\geq 2$.
For deterministic GreedyAV with any fixed public tie-breaking and
approval-count utility,
\[
    \operatorname{rd}(\mathrm{GreedyAV})
    \leq \left\lceil \frac{n}{2}\right\rceil-1.
\]
\end{theorem}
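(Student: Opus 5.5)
The plan is to exhibit a concrete free-riding manipulation. Let $d=\lceil n/2\rceil-1$ and $m=n-1-d=\lfloor n/2\rfloor$ be the number of unknown voters. Fix candidates $a,b$ and other candidates $c_1,\dots$, using $|\C|\ge k+1$. Choose the public tie-breaking order so that $a$ comes first and $b$ comes last; any fixed order can be relabelled this way by neutrality of the construction. The manipulator $i$ has true ballot $T_i=\{a,b\}$. The $d$ known voters all report $\{a\}$. The deviation is $B_i=\{b\}$: $i$ withholds approval from $a$, hoping that $a$ is elected anyway on the known support and that her seat is then spent on $b$.

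\textbf{Profitability.} Take the completion in which all $m$ unknown voters report $\emptyset$.
\begin{itemize}
\item Under truthful reporting, $a$ has $d+1$ uncovered approvals and is chosen first, covering $i$. Every remaining candidate then has zero uncovered support, so the remaining $k-1$ seats are filled by tie-breaking. Since $b$ is last and at least $k$ other candidates exist, $b$ is omitted and $i$ gets utility $1$.
\item Under $B_i$, $a$ has $d$ approvals and $b$ has $1$. As long as $d\ge1$, $a$ is chosen first (a tie at $d=1$ favours $a$). Then $b$ still has uncovered supporter $i$, so it is chosen second because $k\ge2$, and $i$ gets utility $2$.
\end{itemize}
When $d=0$ (i.e.\ $n\le2$), the known set is empty and a separate small construction is needed: with $n=2$ and $d=0$, the deviation $B_i=\{b\}$ gives $b$ one vote, and I would compare the two runs directly.

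\textbf{Safety.} This is the main obstacle. I must show that for every completion $U$ of the $m$ unknown ballots, $|\mu(B_i,\A_K,U)\cap\{a,b\}|\ge|\mu(T_i,\A_K,U)\cap\{a,b\}|$. My approach is to couple the two greedy runs step by step.
\begin{itemize}
\item Since $i$ approves only $a,b$, the two runs differ only in the uncovered scores of $a$ and $b$, and only while $i$ is uncovered.
\item If $b$ is selected in the truthful run before $a$, then $b$ has strictly more uncovered support there. It has one more supporter under the deviation, so the deviating run makes the same choices up to and including $b$, and covers the same voters from then on.
\item The critical case is when the truthful run picks $a$ and the deviating run does not. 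In the truthful run $a$ has score at most $d+1$, and the danger is that under the deviation some competitor $c$ with up to $m$ unknown supporters overtakes $a$, which now has only $d$. For odd $n$ we have $d=m$, so the tie goes to $a$ because $a$ is first in the order. I would argue that $a$ is still picked, possibly later; that the displaced pick cannot be $b$ to $i$'s loss; and that $i$ remains uncovered afterwards, so $b$ gets its extra vote.
\item For even $n$ we have $m=d+1$ and this tie argument fails. I would strengthen the construction, e.g.\ by letting the known voters report $\{a,b'\}$ or by exploiting that an unknown block of size $m$ covering a competitor also reduces the support available to other competitors, and then recheck the comparison.
\end{itemize}

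In all cases the argument has two parts: the selection of $a$ is preserved, and $i$ staying uncovered after $a$'s selection can only raise $b$'s rank. I expect the exhaustive comparison of orderings in the safety step to be the technical core. The most likely place it fails is the even-$n$ boundary, where I would first try to tune the known ballots and the tie-breaking positions.
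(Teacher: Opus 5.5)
\textbf{Verdict: there is a genuine gap, in the safety argument.} Your construction is the paper's own: true ballot $\{a,b\}$, report $\{b\}$, $d$ known $\{a\}$-singletons, and $a$ first in the public order. Putting $b$ last instead of at position $k+1$, and using empty ballots instead of $\{a\}$-ballots for the profitable completion, are harmless changes. Your profitability check is correct for $n\ge 3$.

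Safety is where the work lies, and you leave it open. For even $n$ you propose changing the construction, but no change is needed. What is missing is a first-difference analysis:
\begin{itemize}
\item While $i$ is uncovered, the two runs differ only in $a$'s score, which is one higher in the truthful run. Your remark that $b$ gains a supporter under the deviation is off: $b$'s score is identical in both runs, and it is $a$ that loses one.
\item Hence the runs can first diverge only before either $a$ or $b$ is chosen. At that step the truthful run takes $a$ at uncovered support $d+r+1$, where $r$ counts uncovered unknown voters approving $a$. The deviating run takes some $c$ with exactly that support.
\item If $c\neq b$, only uncovered unknown voters can support $c$, so $d+r+1\le m$. This is impossible for odd $n$, where $m=d$. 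For even $n$, where $m=d+1$, it forces $r=0$ and all $m$ unknown voters uncovered and approving $c$, so the divergence happens in round one. The truthful run then picks $a$ and then $c$. The deviating run picks $c$ and then $a$: once $c$ covers every unknown voter, $a$'s $d\ge 1$ known supporters beat or tie everything, and ties go to $a$. The residual profiles then agree except for one extra uncovered voter with ballot $\{b\}$ on the deviating side.
\item If $c=b$, then $b$ covers $d+r$ unknown voters, leaving at most $m-d-r\le 1\le d$ uncovered. So $a$ is chosen next, unless that was the last seat, in which case both outcomes give utility $1$.
\end{itemize}

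The second missing piece is a step you only assert, that $i$ staying uncovered after $a$ is chosen ``can only raise $b$'s rank.'' This is a genuine lemma: adding an uncovered singleton voter $\{b\}$ to a residual profile cannot remove $b$ from the GreedyAV outcome. It is what closes both the case where $a$ is picked at a common step and the round-one case above. It needs its own coupling proof. Until $b$ is chosen, all non-$b$ scores agree, so the two runs make the same non-$b$ choices. If the original run picks $b$ in the greedy phase, the augmented run picks it no later. If the original run picks $b$ only in the fill stage, the augmented run still sees positive support for $b$ when every other candidate has zero.

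Finally, $n=2$ needs no separate construction. Your empty completion already gives utility $1$ truthfully versus $2$ under the deviation. Safety holds because under the deviation $b$ is chosen within the first two picks. A case check on the other ballot shows the truthful outcome contains both $a$ and $b$ only when $b$ is chosen before $a$, and then the deviating run reproduces it exactly.
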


\begin{lemma}[Adding a singleton supporter]
\label{lem:greedy-singleton-monotonicity}
Fix a candidate set, a public tie-breaking order, and a number $r$ of seats to fill.  Let $P$ be any residual profile, and let $P^+$ be obtained from $P$ by adding one new voter whose ballot is the singleton $\{y\}$.  If GreedyAV, run for $r$ seats on $P$, selects $y$, then GreedyAV, run for $r$ seats on $P^+$, also selects $y$.
\end{lemma}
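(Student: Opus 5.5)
The plan is to run GreedyAV on $P$ and on $P^+$ side by side and show by induction on the step number that the two runs make identical choices until the $P^+$ run selects $y$. The key invariant is the following: as long as both runs have chosen the same committee prefix $S$ with $y\notin S$, the set of covered voters in the two runs differs only in the new voter. The new voter approves only $y$, so she stays uncovered in $P^+$ until $y$ is chosen. Consequently, at every such step, every candidate $x\neq y$ has exactly the same uncovered support $s_x$ in both runs, while $y$ has support $s_y$ in $P$ and $s_y+1$ in $P^+$.

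First I would handle a step at which the $P$ run chooses some $x\neq y$, which happens in the greedy phase since the runs still share prefix $S$. Then $s_x\ge s_y$, and $x$ beats every other $z\neq y$ under the fixed tie-breaking. In $P^+$ only $y$'s score has changed, so the $P^+$ run chooses either $x$ (keeping the prefixes equal and the invariant intact, since covering the same approvers of $x$ leaves the new voter uncovered) or $y$ (in which case we are done). I also need to rule out the $P^+$ run entering the filling phase early. This cannot happen while $y\notin S$, because $y$ has uncovered support at least $1$ from the new voter.

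Next, consider the step at which the $P$ run first selects $y$, assuming the $P^+$ run has not yet done so. The prefixes are still equal by the induction. There are two subcases. If the $P$ run is in the greedy phase, then $s_y\ge s_z$ for all remaining $z$, so in $P^+$ we get $s_y+1>s_z$ strictly. Hence $y$ wins outright, regardless of tie-breaking. If the $P$ run selects $y$ in the filling phase, all remaining candidates have zero uncovered support in $P$. In $P^+$, $y$ has support $1$ and the others have $0$, so the $P^+$ run selects $y$ by the greedy criterion.

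In both subcases $y$ is selected within the same number of steps, hence within $r$ seats. The main obstacle is the bookkeeping around the filling phase and tie-breaking. Specifically, one must verify that the strict $+1$ advantage of $y$ means the $P^+$ run never picks a different candidate from the $P$ run at a step where $P$ picks $y$, and that the filling phase in $P^+$ cannot start before $y$ is chosen. Both follow from the observation that the new voter's presence affects only $y$'s score.
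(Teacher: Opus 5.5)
Your proposal is correct and follows essentially the same coupling argument as the paper's proof: step-by-step induction along the $P$ run, equal non-$y$ scores with a $+1$ for $y$ from the still-uncovered singleton voter, and a split on whether $P$ picks $y$ in the greedy or the fill stage. One slip is harmless: your remark that a shared-prefix step where $P$ picks $x\neq y$ must lie in $P$'s greedy phase is not justified (it can be a fill step), but your argument covers that case anyway, since fill-stage selection is the greedy criterion with all scores zero and $P^+$ then picks $y$ at once because $y$ has score $1$.
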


\begin{proof}
Compare the two deterministic greedy runs.  As long as $y$ has not been selected in the run on $P^+$, the added voter remains uncovered and contributes one extra uncovered approval to $y$, while the uncovered support of every candidate other than $y$ is exactly the same as in the run on $P$, provided the two runs have selected the same earlier non-$y$ candidates.

We prove by induction over the greedy phase of the run on $P$.  Before the first step the claim above is immediate.  Suppose that the two runs have selected the same sequence of non-$y$ candidates so far, and that the run on $P$ next selects a candidate $c\neq y$.  If the run on $P^+$ selects $y$ at this step, we are done.  Otherwise, since all non-$y$ scores and their tie-breaking order are unchanged, it also selects $c$.  Thus, the two runs remain coupled until the run on $P$ selects $y$, or until $P$ enters the fill stage.

If the run on $P$ selects $y$ during the greedy phase, then at the corresponding coupled history the run on $P^+$ gives $y$ one additional uncovered approval and leaves all non-$y$ scores unchanged; hence, it must select $y$ no later than that step.  If the run on $P$ selects $y$ only in the tie-breaking fill stage, then, at the corresponding history, all remaining candidates have zero uncovered support in $P$.  In $P^+$, the added singleton voter is still uncovered, so $y$ has positive uncovered support and is selected before the fill stage can omit it.  Therefore, adding the singleton supporter cannot remove $y$ from the GreedyAV outcome.
\end{proof}

\begin{proof}[Proof of Theorem~\ref{thm:greedy-rat}]
Let the tie-breaking order be
\[
    c_1\prec c_2\prec\cdots\prec c_m.
\]
Set $x=c_1$ and $y=c_{k+1}$, and let
\[
    d=\left\lceil\frac{n}{2}\right\rceil-1,
    \qquad
    h=n-1-d=\left\lfloor\frac{n}{2}\right\rfloor.
\]
Consider a manipulator whose true ballot is
\[
    T_i=\{x,y\}
\]
and whose strategic report is
\[
    B_i=\{y\}.
\]
Suppose that the manipulator knows $d$ other voters and that each of them
reports the singleton ballot $\{x\}$. There are $h$ remaining voters.

We first prove safety. The case $n=2$ is immediate. Under the strategic
report, $y$ is selected among the first two choices: until $y$ is selected,
the manipulator remains uncovered and gives $y$ positive uncovered support,
whereas the only other voter can be covered by the selection of at most one
candidate. Hence, the strategic outcome gives the manipulator utility at least
one. Moreover, if the truthful outcome contains both $x$ and $y$, then either
$y$ is selected before $x$, or the other voter approves $y$ but not $x$; in
either case the strategic outcome also contains both $x$ and $y$.

Now suppose that $n\geq 3$, and fix an arbitrary completion of the $h$
unknown voters. Compare the truthful and strategic GreedyAV runs. If $y$ is
selected at a common step, then the manipulator is covered in both runs and
the two residual profiles are identical. If $x$ is selected at a common step,
then the truthful residual profile is obtained from the strategic residual
profile by deleting one uncovered voter whose ballot is the singleton $\{y\}$.
Therefore, Lemma~\ref{lem:greedy-singleton-monotonicity} implies that, whenever the truthful residual run selects
$y$, the strategic residual run also selects $y$. Thus, safety follows unless
the two runs first differ before either $x$ or $y$ has been selected.

Consider such a first difference. Let $a\leq h$ be the number of currently
uncovered unknown voters, and let $r_x$ of them approve $x$. At this common
history, the uncovered support of $x$ is
\[
    d+r_x+1
\]
in the truthful run and $d+r_x$ in the strategic run. The uncovered support
and the tie-breaking order of every other candidate are the same in the two
runs. Since $x=c_1$ is first in the tie-breaking order, the truthful run must
select $x$, while the strategic run selects some candidate $c\neq x$ whose
uncovered support is exactly
\[
    d+r_x+1.
\]

First suppose that $c=y$. Candidate $y$ is supported by the manipulator and
by exactly $d+r_x$ currently uncovered unknown voters. If this is the last
seat, then the truthful run adds $x$ and the strategic run adds $y$, so the
manipulator obtains utility one in both runs. Otherwise, after the strategic
run selects $y$, at most
\[
    a-(d+r_x)\leq h-d-r_x
\]
unknown voters remain uncovered. If $n$ is odd, then $h=d$; if $n$ is even,
then $h=d+1$ and, since $n\geq 3$, we have $d\geq 1$. Hence, the number of
remaining uncovered unknown voters is at most $d$. Candidate $x$ still has
its $d$ known singleton supporters, while every candidate other than $x$ has
support from at most the remaining unknown voters. The strategic run,
therefore, selects $x$ next. Its outcome then contains both $x$ and $y$, and
cannot give the manipulator lower utility than the truthful outcome.

Now suppose that $c\neq y$. Candidate $c$ can be supported only by unknown
voters, so the equality above gives
\[
    d+r_x+1\leq a\leq h.
\]
This is impossible when $n$ is odd, because then $d=h$. When $n$ is even,
we have $h=d+1$, and all inequalities must be equalities. Thus,
\[
    r_x=0,\qquad a=h,
\]
and every currently uncovered unknown voter approves $c$. Since all $h$
unknown voters, all $d$ known voters, and the manipulator are still uncovered,
this first difference occurs in the first round. The truthful run selects $x$
and then $c$, while the strategic run selects $c$ and then $x$. Indeed, after
$x$ is selected truthfully, all unknown voters remain uncovered and $c$ is the
first candidate of maximum uncovered support among candidates other than
$x$; after $c$ is selected strategically, all unknown voters are covered, and
$x$ is selected next using its $d\geq 1$ known singleton supporters. After
these two selections, the truthful residual profile has all voters covered,
whereas the strategic residual profile has exactly one additional uncovered
voter with singleton ballot $\{y\}$. If $k=2$, both outcomes already contain
$x$ and omit $y$, so they give the same utility. If $k>2$, Lemma~\ref{lem:greedy-singleton-monotonicity}, applied
to the remaining seats, shows that the strategic outcome contains $y$ whenever
the truthful outcome contains $y$. In either case, both outcomes contain $x$,
so the deviation is safe.

It remains to show profitability. Consider the completion in which every one
of the $h$ unknown voters reports $\{x\}$. Under truth, all voters approve
$x$, so GreedyAV selects $x$ first. All voters are then covered, and the
remaining $k-1$ seats are filled with $c_2,\ldots,c_k$. Since $y=c_{k+1}$,
the truthful outcome contains $x$ but not $y$, and the manipulator's true
utility is one. Under the strategic report, GreedyAV still selects $x$ first,
but the manipulator remains uncovered. Candidate $y$ is then the only
remaining candidate with positive uncovered support and is selected next.
The strategic outcome contains both $x$ and $y$, giving true utility two.
Thus, the report $\{y\}$ is safe and profitable when the manipulator knows
$d=\lceil n/2\rceil-1$ other voters.
\end{proof}

\subsection{Proportional approval voting}

For a committee $W$, the PAV score is
\[
  \PAV(W)=\sum_{i\in \N} \sum_{\ell=1}^{|W\cap A_i|}\frac{1}{\ell}.
\]
The deterministic PAV rule chooses a score-maximizing size-$k$ committee, with fixed tie-breaking.  PAV is known to satisfy stronger proportionality axioms than JR~\cite{aziz2017justified,lackner2023multi}.

\begin{theorem}[PAV RAT upper bound]
\label{thm:pav-rat}
Let $q=\lceil n/k\rceil$ and $r=\lceil k/2\rceil$.  Suppose that $k\ge 4$, $|\C|\ge k+1$, and $n\ge q+r+1$.
For deterministic PAV with fixed tie-breaking and approval-count utility,
\[
  \rd(\PAV)\le q .
\]
\end{theorem}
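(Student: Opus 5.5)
The plan is a free-riding manipulation in which the manipulator drops a candidate that is guaranteed to win anyway. Fix a candidate $x$ and a candidate $y\neq x$. The manipulator $i$ has true ballot $T_i=\{x,y\}$ and reports $B_i=\{y\}$. She knows a set $K$ of exactly $q=\lceil n/k\rceil$ other voters, each reporting the singleton $\{x\}$. These $q$ voters form a group of size at least $n/k$ whose common candidate is $x$ and whose union of ballots is $\{x\}$. Since PAV satisfies JR, every PAV winner must contain $x$, at the truthful profile and at the deviating profile alike, for every completion $U$. So in both runs the manipulator's utility is $1+[y\in W]$. Safety therefore reduces to the claim that if $y$ wins under truth, it also wins under the deviation.

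\emph{Safety.} For any completion, compare the two PAV score functions on committees containing $x$; by the JR argument above, all optimal committees and the eventual winner lie in this family. The manipulator contributes $1+\tfrac12[y\in W]$ under truth and $[y\in W]$ under the deviation, and all other voters contribute identically. Hence, on this family,
\[
  \PAV_{\text{dev}}(W)=\PAV_{\text{truth}}(W)-1+\tfrac12[y\in W].
\]
The relative order among committees containing $y$ is unchanged, and each of them gains $\tfrac12$ relative to every committee without $y$. Let $W^*\ni y$ be the truthful winner. Under the deviation, $W^*$ still beats every other $y$-committee (same scores up to a shift, and the same fixed tie-breaking). It also strictly beats every $x$-committee without $y$, since it weakly beat them before and now gains $\tfrac12$. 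So the deviating winner is $W^*$, and the deviation is safe. The only care needed is the tie-breaking bookkeeping: a former tie with a non-$y$ committee turns into a strict win for $W^*$.

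\emph{Profitability.} This is the step I expect to require the most care, because we need a completion in which $y$ narrowly loses under truth but wins after the $\tfrac12$ bonus. Use exactly $k+1$ relevant candidates: $x$, $y$, and a set $D$ of $k-1$ further candidates, with any additional candidates approved by nobody. Let $r=\lceil k/2\rceil$ unknown voters approve $D$, and let the remaining $n-q-1-r\ge0$ unknown voters report empty ballots; this is possible since $n\ge q+r+1$. Every winner contains $x$ and has $k-1$ further seats. The candidates are $x\cup D$ versus committees of the form $x\cup\{y\}\cup(D\setminus\{d\})$. Including additional unapproved candidates only lowers scores, so it can be ignored.

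Replacing one member of $D$ by $y$ changes the score by $-r/(k-1)$ for the $D$-voters, and by $+\tfrac12$ for the manipulator under truth or $+1$ under the deviation. For $k\ge4$ we have $\tfrac12<\lceil k/2\rceil/(k-1)<1$; this fails at $k=3$, which explains the hypothesis $k\ge4$. Hence under truth the unique winner is $\{x\}\cup D$ with utility $1$, and under the deviation the winners strictly contain $y$ and $x$, giving utility $2$. I would finally check that no committee omitting $x$ could win. This is already excluded by JR, and in any case such a committee loses at least $q\ge1$ from the known voters.
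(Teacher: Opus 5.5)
Your proposal is correct and follows essentially the same route as the paper. It uses the same $\{x,y\}\to\{y\}$ deviation with $q$ known $x$-only voters, JR to force $x$ into every winner, and the $-\tfrac12$ versus $-1$ score shift for safety. For profitability it uses the identical completion with $r=\lceil k/2\rceil$ voters approving $k-1$ other candidates, so that $\tfrac12<r/(k-1)<1$. The only cosmetic difference is that you identify the deviating winner as exactly $W^*$, which relies on tie-breaking being a fixed order on committees. The paper needs only that every deviation-optimal committee contains both $x$ and $y$, and your argument already establishes this.
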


The numerical upper bound in Theorem~\ref{thm:pav-rat} coincides with the
vulnerability bound obtained independently by Sinay and
Gonen~\cite{sinay2026degree}.  Their PAV bound is stronger and more informative
within the superset-dropping model: it uses the more restrictive superset
comparison and additionally proves immunity when at most
$\lfloor n/(k+1)\rfloor-1$ other ballots are known.  We include our direct
argument for completeness.

\begin{proof}[Proof of Theorem~\ref{thm:pav-rat}]
Choose distinct candidates $x,y,z,d_1,\ldots,d_{k-2}\in \C$, and write
\[
  C'=\{x,y,z,d_1,\ldots,d_{k-2}\}.
\]
Let the manipulator's true ballot be $\{x,y\}$ and let the report be $\{y\}$.  The manipulator knows $q$ voters who approve only $x$.

By this JR guarantee, every PAV-winning committee under the deviation contains $x$, because the known $x$-only voters form a JR group.  Hence, the manipulated utility is always at least one.  If a truthful PAV winner gives the manipulator utility one, safety is immediate.  If a truthful PAV winner $W$ contains both $x$ and $y$, compare $W$ with any committee $V$ that contains $x$ and omits $y$.  Changing the manipulator's ballot from $\{x,y\}$ to $\{y\}$ lowers the PAV score of $W$ by $1/2$ and lowers the score of $V$ by $1$.  Since $W$ was a truthful optimum, $W$ is strictly ahead of $V$ after the deviation.  Thus, no manipulated PAV winner can contain $x$ while omitting $y$, and the deviation is safe.

For profitability, let
\[
  D=\{d_1,\ldots,d_{k-2}\},
  S=\{z,d_1,\ldots,d_{k-2}\}.
\]
Create $r$ unknown voters who approve exactly $S$, and let all other unknown voters approve nothing.  This is possible because $n\ge q+r+1$.  Write $H_t=\sum_{\ell=1}^t 1/\ell$.

In the constructed profile, every voter approves only candidates in $C'$.  Moreover, every candidate in $C'$ is approved by at least one voter, while candidates outside $C'$ are approved by no one.  Hence, no PAV-winning committee contains a candidate outside $C'$: replacing such a candidate by any omitted candidate from $C'$ strictly increases the PAV score.  Thus, it is enough to compare size-$k$ committees contained in $C'$, each of which omits exactly one candidate from $C'$.  Under the truthful ballot, the committee
\[
  W_y=C'\setminus\{y\}=\{x\}\cup S
\]
has score $q+1+rH_{k-1}$.  The committee omitting $x$ has score $1+rH_{k-1}$, and every committee omitting $z$ or some $d_j$ has score $q+\frac32+rH_{k-2}$.  Therefore, $W_y$ is the unique truthful PAV winner, because
\[
  q>0,
  \qquad
  \frac{r}{k-1}-\frac12>0 .
\]

Under the report $\{y\}$, the committee $W_y$ has score $q+rH_{k-1}$, the committee omitting $x$ has score $1+rH_{k-1}$, and every committee omitting $z$ or some $d_j$ has score $q+1+rH_{k-2}$.  The latter committees are precisely the PAV winners, because
\[
  1-\frac{r}{k-1}>0,
  \qquad
  q-\frac{r}{k-1}>0 .
\]
Each of them contains both $x$ and $y$.  The manipulator's utility rises from one to two, so the deviation is profitable.
\end{proof}

\subsection{The method of equal shares}

The method of equal shares was introduced for approval committees as Rule X by
Peters and Skowron~\cite{peters2020proportionality}.  It was later extended to
participatory budgeting by Peters et al.~\cite{peters2021proportional}.  We use
the following deterministic unit-cost approval version.  Each voter initially
has budget $b=k/n$.  For an unelected candidate $c$, let $\rho(c)$ be the
smallest $\rho\ge 0$ such that
\[
  \sum_{i:c\in A_i} \min\{b_i,\rho\}\ge 1,
\]
where $b_i$ is voter $i$'s current remaining budget.  If no such $\rho$ exists, then $c$ is not affordable.  While fewer than $k$ candidates have been selected and at least one candidate is affordable, $\MES$ selects an affordable candidate with minimum $\rho(c)$, breaking ties by a fixed public order, and charges each approver $i$ the amount $\min\{b_i,\rho(c)\}$.  If the MES phase stops before $k$ candidates have been selected, the remaining seats are filled by the same fixed public order.

\begin{theorem}[MES RAT upper bound]
\label{thm:mes-rat}
Suppose that $n\geq k\geq 2$ and $|C|\geq k+1$.
For deterministic unit-cost approval MES with fixed public tie-breaking,
fixed public completion order, and approval-count utility,
\[
    \operatorname{rd}(\mathrm{MES})\leq 1.
\]
\end{theorem}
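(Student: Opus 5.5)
The plan mirrors the GreedyAV and PAV arguments: a manipulator who drops a candidate she expects to win anyway. Fix the public order $c_1\prec\cdots\prec c_m$. Set $x=c_1$ and pick $y$ late in the order, say $y=c_{k+1}$. The manipulator $i$ has true ballot $T_i=\{x,y\}$ and reports $B_i=\{y\}$. She knows a single voter $j$, with $A_j=\{x\}$, so $|K|=1$. The intuition is that MES charges approvers of a bought candidate. Under $B_i$ she stops spending on $x$, keeps her whole budget $b=k/n$, and can use it to fund $y$. Meanwhile $x$ keeps the known supporter $j$ and any unknown supporters.

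\textbf{Safety.} Fix an arbitrary completion $U$ and couple the truthful and strategic MES runs step by step. I would establish three claims, in the following order.
\begin{itemize}
\item[(i)] Until $x$ or $y$ is bought, every other candidate has the same $\rho$-value in both runs, since only $x$'s support and payments differ. Hence the runs agree except that $x$ may be bought earlier in the truthful run.
\item[(ii)] If the strategic run buys $x$ at all, then after that point the strategic manipulator has at least as much budget as the truthful one. A monotonicity lemma in the spirit of Lemma~\ref{lem:greedy-singleton-monotonicity} then shows that $y$ is bought in the strategic run whenever it is bought in the truthful run. The lemma to prove is that adding budget to a voter whose ballot is $\{y\}$ cannot remove $y$ from the MES output.
\item[(iii)] Any outcome in which the strategic run omits $x$ must contain $y$. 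Then the manipulator still gets utility at least one, and at least as much as truthfully.
\end{itemize}

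For (iii) I would use the fact that $x$ is first in tie-breaking. The voters approving $x$ in the strategic run are the same as in the truthful run, minus $i$. So if $x$ is lost strategically, the only cause can be the extra $y$-purchase, and that purchase already gives her $y$. The other failure mode is $x$ missing in both runs; then we need $y$ in the strategic outcome whenever it is in the truthful one, which again follows from the monotonicity lemma in (ii). Finally, the completion phase by public order is identical in both runs once the MES phases have bought the same set, so only the MES phase needs to be compared.

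\textbf{Profitability.} I would construct a completion in which $x$ is bought truthfully using $i$'s money, while $y$ is \emph{just} unaffordable. Concretely, let some unknown voters approve $\{x\}$ so that $x$ is cheap. Let a set $S$ of other unknown voters approve $\{y\}$, with total budget in the window
\[
1-\tfrac{k}{n}\le |S|\tfrac{k}{n}< 1-\tfrac{k}{n}+\rho(x).
\]
Let everyone else approve nothing. Then:
\begin{itemize}
\item Truthfully, $i$ pays $\rho(x)$ for $x$, so $y$ becomes unaffordable. The completion order fills the remaining seats with $c_2,\dots,c_k$, which do not include $y$, so her utility is $1$.
\item Strategically, $x$ is still bought, now without $i$. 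Then $y$ becomes affordable from $i$'s untouched budget together with $S$. Her utility is $2$.
\end{itemize}

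\textbf{Main obstacles.} The main obstacle I expect is the integrality in the profitability step. The window above must contain a multiple of $k/n$, and this must work for all $n\ge k\ge 2$. I would first try tuning the number of $x$-supporters so that $\rho(x)$ is exactly $k/n$, which makes the window have length $k/n$. I would then treat $n=k$ separately: there $i$ alone can afford $y$, so $S=\emptyset$ works, with $x$ funded by the known voter. A second delicate point is claim (ii), because partial payments $\min\{b_i,\rho\}$ make the monotonicity lemma less clean than in the greedy case.
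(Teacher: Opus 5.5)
\textbf{The deviation you propose is not safe.} The gap is in the construction itself, not in the points you flagged (the integrality window and the partial payments in (ii)). Claim (i) fails at its conclusion. At the first step where the truthful run buys $x$, the strategic run can buy a different candidate $c$. From then on, different voters have been charged, and the later purchases can diverge in ways that neither (ii) nor (iii) controls; (ii) only perturbs the manipulator's own budget.

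Here is a concrete counterexample. Let $k=3$ and $n=9$, so $b=1/3$. Let $C=\{c_1,\dots,c_4\}$ with $x=c_1$, $c=c_2$, and $y=c_4=c_{k+1}$. Alongside the known $\{x\}$-voter $j$, take the completion with two ballots $\{x,c\}$, two ballots $\{y,c\}$, one ballot $\{y\}$, and two empty ballots.

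\emph{Truthful run.} Candidates $x$, $c$, $y$ each have four supporters and $\rho=1/4$. $x$ wins the tie, leaving $i$, $j$ and the $\{x,c\}$-voters with $1/12$ each. Now $c$ is unaffordable (total $2/12+2/3=5/6$), while $y$ is bought at $\rho=11/36$. Nothing further is affordable, so completion adds $c_2$. The outcome $\{x,y,c\}$ gives utility $2$.

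\emph{Strategic run.} Under the report $\{y\}$, $x$ has only three supporters ($\rho=1/3$). So $c$ wins its tie with $y$ at $\rho=1/4$ and drains the $\{y,c\}$-voters to $1/12$. Now $y$ has total budget $1/3+2/12+1/3=5/6<1$ and $x$ has $1/2<1$. The MES phase stops, and completion adds $x$ and $c_3$. The outcome $\{x,c,c_3\}$ gives utility $1$.

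Your known $\{x\}$-voter already forces $x$ into every outcome. If $x$ is never bought, her budget is unspent, so at most $k-1$ seats are bought and $x=c_1$ takes the first completion seat. So the whole risk sits on $y$, whose fate depends on the global MES dynamics.

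\textbf{How the paper avoids this.} It never compares divergent runs.
\begin{itemize}
\item The known voter has the \emph{empty} ballot, so the MES phase buys at most $k-1$ candidates and $x=c_1$ is always elected.
\item The manipulator's true ballot is $C\setminus\{z\}$ with $z=c_2$, and she reports $C\setminus\{x,z\}$. Her utility is $k$ exactly when $z$ is omitted. This happens exactly when the MES phase buys $k-1$ candidates avoiding both $x$ and $z$.
\item In that case the truthful run never buys $x$. As long as the truthful run does not buy $x$, the two runs make identical choices, because $x$ is the only candidate whose support changes and it is first in the order. Hence the strategic outcome is the same, and no monotonicity lemma is needed.
\item Profitability comes from the completion in which every unknown voter approves $C$.
\end{itemize}
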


\begin{proof}[Proof of Theorem~\ref{thm:mes-rat}]
Let the fixed public order used for tie-breaking and completion be
\[
    c_1\prec c_2\prec\cdots\prec c_m,
\]
and set $x=c_1$ and $z=c_2$. Let the manipulator's true ballot and strategic
report be
\[
    T_i=C\setminus\{z\},
    \qquad
    B_i=C\setminus\{x,z\}.
\]
The manipulator knows one other voter whose ballot is empty. All remaining
$n-2$ voters are unknown.

We first prove safety. The known voter with the empty ballot never spends any
of her initial budget $b=k/n$. Since every candidate selected in the MES phase
costs one, the MES phase selects at most $k-1$ candidates under either report.
Thus, at least one seat is filled by the fixed public completion order, and the
final committee always contains $x$: if $x$ is not selected in the MES phase,
it is the first candidate added at completion.

Because the MES phase selects at most $k-1$ candidates, the final committee
omits $z$ if and only if the MES phase selects exactly $k-1$ candidates and
selects neither $x$ nor $z$. Indeed, if $z$ is selected in the MES phase, it is
in the outcome. If $x$ is selected but $z$ is not, then $z$ is the first
unselected candidate in the completion order. If at most $k-2$ candidates are
selected in the MES phase, there are at least two completion seats, which add
$x$ and then $z$. Conversely, if exactly $k-1$ candidates are selected while
both $x$ and $z$ are omitted, the unique completion seat is filled by $x$.

Compare the truthful and strategic MES runs at a common history at which the
same candidates have been selected and the truthful run has not selected $x$.
The two remaining-budget vectors are identical. The reports differ only on
$x$, and the manipulator has the same approval status for every candidate
selected so far. Hence, every candidate other than $x$ has the same value of
$\rho$ in the two runs, while $x$ can only be weakly less affordable under the
strategic report. If the truthful run next selects a candidate $c\neq x$, then,
because $x=c_1$ is first in the tie-breaking order, either $x$ is not affordable
or
\[
    \rho(c)<\rho(x)
\]
in the truthful run. The strategic run, therefore, selects the same candidate
$c$. Similarly, if the truthful MES phase stops without selecting $x$, the
strategic MES phase also stops. It follows inductively that, whenever the
truthful MES phase never selects $x$, the two MES phases select exactly the
same sequence of candidates.

Since $T_i=C\setminus\{z\}$, every size-$k$ committee gives the manipulator
utility either $k$ or $k-1$. If the truthful outcome contains $z$, its utility
is $k-1$, and the strategic outcome cannot give lower utility. If the truthful
outcome omits $z$, the characterization above implies that the truthful MES
phase selects exactly $k-1$ candidates and selects neither $x$ nor $z$. In
particular, it never selects $x$. Therefore, the two MES phases select the same
$k-1$ candidates, and the unique completion seat is filled by $x$ in both
runs. Candidate $z$ remains unelected, so both outcomes give utility $k$.
The deviation is safe for every completion.

It remains to show profitability. Consider the completion in which every
unknown voter reports $C$. Let $h=n-1$ be the number of voters other than the
known voter with the empty ballot. Under the truthful report, every candidate
other than $z$ has $h$ supporters, while $z$ has $h-1$ supporters. Suppose
that the MES phase has selected $t\leq k-2$ candidates supported by all $h$
of these voters. Each of the $h$ voters then has remaining budget
\[
    b-\frac{t}{h}.
\]
Moreover,
\[
    h\left(b-\frac{t}{h}\right)
    =hb-t
    \geq hb-(k-2)
    =2-\frac{k}{n}
    \geq 1.
\]
Thus, a candidate with all $h$ supporters has $\rho=1/h$. At this value of
$\rho$, a candidate with only $h-1$ supporters receives at most
$(h-1)/h<1$ and, hence, has strictly larger $\rho$ if it is affordable. It
follows inductively that the truthful MES phase first selects $k-1$
candidates different from $z$. Candidate $x$ is among them because it is
first in the fixed public order. After these selections, the total remaining
budget of the $h$ voters is
\[
    hb-(k-1)=1-\frac{k}{n}<1.
\]
The voter with the empty ballot supports no candidate, so the MES phase stops.
The unique completion seat is filled by $z$, and the manipulator's truthful
utility is $k-1$.

Under the strategic report, every candidate in $C\setminus\{x,z\}$ has $h$
supporters, while $x$ and $z$ have $h-1$ supporters. Since
$|C\setminus\{x,z\}|\geq k-1$, the same calculation shows that the MES phase
selects $k-1$ candidates from $C\setminus\{x,z\}$ and then stops. The unique
completion seat is filled by $x$, while $z$ remains unelected. The strategic
outcome, therefore, gives the manipulator utility $k$. The deviation is safe and
profitable after learning one other voter's ballot, proving
\[
    \operatorname{rd}(\mathrm{MES})\leq 1.
\]
\end{proof}

\section{Future Work}

Our analysis leaves open whether there are JR mechanisms with substantially higher RAT-degree than the standard rules studied here.
One direction is to design natural, polynomial-time JR mechanisms whose safe manipulations require much more information about other voters.
It would also be useful to understand whether auxiliary non-ballot information can improve RAT-degree under a clean model in which such information is exogenous and non-strategic.

\bibliographystyle{alphaurl}
\bibliography{ref}

\end{document}